\documentclass[11pt]{article}

\usepackage{algorithm} 
\usepackage[noend]{algpseudocode}
\usepackage[margin=1in]{geometry}
\usepackage{setspace}
\usepackage{lineno}

\algrenewcommand\algorithmicrequire{\textbf{Input:}}
\algrenewcommand\algorithmicensure{\textbf{Output:}}

\usepackage[T1]{fontenc}
\usepackage[utf8]{inputenc}
\usepackage{lmodern}
\usepackage{microtype}

\usepackage{amsmath,amssymb,amsthm,mathtools}
\usepackage{bm}

\usepackage[backend=biber,style=numeric,maxbibnames=100,maxalphanames=4,sortcites=true]{biblatex}

\bibliography{general,stacks,queues,track}

\usepackage[dvipsnames,svgnames,x11names,table]{xcolor}
\usepackage[colorlinks]{hyperref}
\usepackage{nicefrac}
\usepackage{graphicx}
\usepackage{algorithm}
\usepackage{authblk}
\usepackage{esvect}
\usepackage[normalem]{ulem}
\usepackage{pgfplots}
\usepackage{multicol}
\usepackage[capitalize,nameinlink,noabbrev]{cleveref}
\usepackage{subcaption}
\usepackage[font=small]{caption}
\usepackage[toc,page]{appendix}
\usepackage{makecell}
\usepackage{array}
\usepackage[inline]{enumitem}
\usepackage{booktabs}

\newcolumntype{C}[1]{>{\centering\arraybackslash}p{#1}}

\pgfplotsset{compat=1.6}

\hypersetup{
	breaklinks = true,
	colorlinks = true,
	citecolor = teal,
	urlcolor = purple,
}

\newtheorem{theorem}{Theorem}
\newtheorem{lemma}{Lemma}

\Crefname{theorem}{Theorem}{Theorems}
\Crefname{theorem2}{Theorem}{Theorems}
\Crefname{lemma}{Lemma}{Lemmas}
\Crefname{lemma2}{Lemma}{Lemmas}
\Crefname{figure}{Fig.}{Figs.}
\Crefname{section}{Section}{Sections}
\Crefname{observation}{Observation}{Observations}
\Crefname{property}{Property}{Properties}
\Crefname{lemma}{Lemma}{Lemmas}
\Crefname{claim}{Claim}{Claims}
\Crefname{claimx}{Claim}{Claims}
\Crefname{figure}{Fig.}{Figs.}
\Crefname{subfigure}{Fig.}{Figs.} 
\Crefname{minipage}{Fig.}{Figs.}
\Crefname{enumi}{Property}{Properties}
\Crefname{reductionrule}{Rule}{Rules}
\Crefname{algorithm}{Algorithm}{Algorithms}

\usepackage{xcolor}
\usepackage{soul}
\definecolor{realblue}{rgb}{0,0,1}
\definecolor{defblue}{rgb}{0.274,0.392,0.666}
\definecolor{darkerblue}{rgb}{0.094,0.455,0.804}
\definecolor{darkblue}{rgb}{0.063,0.306,0.545}
\definecolor{linkblue}{rgb}{0.098,0.098,0.4392}
\definecolor{red}{rgb}{0.627,0.117,0.156}
\definecolor{green}{RGB}{51, 153, 102}
\definecolor{orange}{rgb}{0.903,0.739,0.382}
\definecolor{realred}{rgb}{1,0,0}
\definecolor{lipicsblue}{rgb}{0.08235294118,0.3098039216,0.537254902}

\usepackage{xspace} 
\DeclareTextFontCommand{\emph}{\color{defblue}\em}
\DeclareTextFontCommand{\bl}{\color{lipicsblue}}
\hypersetup{colorlinks=true,
	linkcolor=lipicsblue,
	anchorcolor=lipicsblue,
	citecolor=lipicsblue,
	filecolor=lipicsblue,
	menucolor=lipicsblue,
	urlcolor=lipicsblue,
	bookmarksopen=true,
	bookmarksopenlevel=2,
	bookmarksnumbered=true,
	plainpages=false,
}

\usepackage{tocloft}
\title{Product Structure Meets Track Layouts}

\author[1]{Michael~A.~Bekos}
\author[2]{Giordano~Da~Lozzo}
\author[3]{Petr~Hliněný}
\author[4]{Michael~Kaufmann}

\affil[1]{University of Ioannina, Ioannina, Greece\\\texttt{bekos@uoi.gr}}
\affil[2]{Roma Tre University, Rome, Italy\\\texttt{giordano.dalozzo@uniroma3.it}}
\affil[3]{Masaryk University, Brno, Czech Republic\\\texttt{hlineny@fi.muni.cz}}
\affil[4]{University of T{\"u}bingen, T{\"u}bingen, Germany\\\texttt{michael.kaufmann@uni-tuebingen.de}}

\date{}

\usepackage{amsmath,amssymb,amsthm}
\usepackage{xspace}
\usepackage{amssymb}

\DeclareMathOperator{\tn}{tn}

\begin{document}

\maketitle

\begin{abstract}
    A {\em track layout} of a graph is a partition of its vertices into linearly ordered independent sets, called {\em tracks}, such that no two edges between the same pair of tracks cross.
    Given a graph, the goal in this context is to determine its {\em track number}, that is, the minimum number of tracks required for the graph to admit a track layout.
    
    In this work, we present upper bounds on the track number of graphs admitting a product structure. Our main contribution is an algorithm that computes a track layout with at most $(2h+1) \cdot r \cdot \tn(H)$ tracks for every subgraph of the strong product $P^h \boxtimes K_r \boxtimes H$, where $P^h$ is the $h$-th power of a path $P$, $K_r$ is the complete graph on $r$ vertices, and $H$ is a graph with track number $\tn(H)$.
    Combined with existing product-structure results from the literature, this algorithm yields upper bounds on the track number of several graph classes.
    For planar graphs, the obtained bound matches the current best-known upper bound of $225$.
    For $1$-planar and optimal $2$-planar graphs, our algorithm yields track layouts with at most $375$ tracks, while for genus-$k$, $k$-planar, $k$-framed, $k$-map, and $k$-string graphs it provides track layouts with a number of tracks that depends solely on $k$, thus establishing new upper bounds on the track number for these graph classes.  
    The algorithm runs in linear time for planar graphs and, more generally, in $O(n + h \cdot r \cdot t +  f_t(H))$ time whenever a corresponding product-structure decomposition of the input $n$-vertex graph is provided as part of the input, where $t=\tn(H)$ and $f_t(H)$ is the time needed to compute a $t$-track layout of $H$. 
    Furthermore, our algorithm only uses elementary linked-list data structures.
\end{abstract}

\section{Introduction}
\label{sec:introduction}

Track layouts form a central notion in graph drawing and topological graph theory. In a \emph{track layout} of a graph~\cite{DBLP:journals/dmtcs/DujmovicPW04}, vertices are partitioned into linearly ordered independent sets, called \emph{tracks}, such that no two edges between the same pair of tracks \emph{cross}, i.e., their endpoints appear in opposite orders along the two tracks. The minimum number of tracks required for a graph $G$ to admit a track layout is its \emph{track~number} and is denoted by $\tn(G)$. 
As a graph parameter, track number has been studied extensively because of its close relationships with several other important graph parameters. First, graphs with bounded track number admit three-dimensional grid drawings of linear volume, and conversely such drawings imply bounded track number~\cite{DBLP:journals/siamcomp/DujmovicMW05,DBLP:journals/dmtcs/DujmovicPW04}. Second, graphs with bounded track number have bounded queue number (see, e.g.,~\cite{DBLP:journals/siamdm/HeathLR92,DBLP:journals/siamcomp/HeathR92}) and vice versa up to suitable refinements~\cite{DBLP:journals/dmtcs/DujmovicPW04}. Third, graphs with bounded track number have bounded acyclic chromatic number, that is, they admit vertex colorings with a bounded number of colors that avoid bichromatic cycles~\cite{DBLP:journals/dmtcs/DujmovicPW04}. 
These connections make track number a natural meeting point of Graph Drawing, Graph Theory, and Linear Layouts.

Since determining the track number of a graph is NP-hard~\cite{DBLP:conf/gd/BannisterDDEW16}, a large body of research has focused on establishing bounds on the track number of specific graph classes, that is, on the maximum track number over all members of the class; see \cref{tab:track-bounds} for a summary. In this context, planar graphs have played a central role, since the question of whether they have bounded track number remained open for several years; see, e.g.,~\cite{DBLP:journals/dmtcs/DujmovicPW04}. The question was resolved affirmatively by Dujmovic et al.~\cite{DujmovicJoretMicekMorinUeckerdtWood2020}, who proved that planar graphs have bounded queue number. 
A subsequent slight improvement was given by Bekos, Gronemann, and Raftopoulou~\cite{DBLP:journals/algorithmica/BekosGR23}. The current best-known upper bound is $225$, due to Pupyrev~\cite{DBLP:journals/jgaa/Pupyrev20}.

\begin{table}[t]
\centering
\small
\begin{tabular}{lcccc}
\toprule
\textbf{Graph Class} & \textbf{Lower Bound} & \textbf{Ref.} & \textbf{Upper Bound} & \textbf{Ref.} \\
\midrule
Trees & 3 & \cite{DBLP:journals/jgaa/FelsnerLW03} & 3 & \cite{DBLP:journals/jgaa/FelsnerLW03} \\
Level planar graphs & 3 & \cite{DBLP:journals/algorithmica/BannisterDDEW19} & 3 & \cite{DBLP:journals/algorithmica/BannisterDDEW19}\\
Outerplanar graphs & 5 & \cite{DBLP:journals/dmtcs/DujmovicPW04} & 5 & \cite{DBLP:journals/dmtcs/DujmovicPW04} \\
Series--parallel graphs & 7 & \cite{DBLP:journals/siamcomp/DujmovicMW05} & 15 & \cite{DBLP:journals/comgeo/GiacomoLM05} \\
Planar 3-trees & 8 & \cite{DBLP:journals/jgaa/Pupyrev20} & 25 & \cite{DBLP:journals/jgaa/Pupyrev20} \\
Planar graphs & 8 & \cite{DBLP:journals/jgaa/Pupyrev20} & 225 & \cite{DBLP:journals/jgaa/Pupyrev20} \\
Treewidth-$k$ graphs & $\frac{1}{2}(k + 1)(k + 2) + 1$ & \cite{DBLP:journals/siamcomp/DujmovicMW05} & $(k + 1)(2^{k+1}-2)^k$ & \cite{DBLP:journals/combinatorics/Wiechert17}\\
\bottomrule
\end{tabular}
\caption{Known lower and upper bounds on the track number for selected graph classes.}
\label{tab:track-bounds}
\end{table}

A key ingredient underlying these developments is a structural result that expresses planar graphs in terms of simpler graph classes. Namely, every planar graph is a subgraph of a graph isomorphic to the strong product $P \boxtimes K_3 \boxtimes H$ of a path, a clique on three vertices, and a planar 3-tree~\cite{DBLP:conf/focs/DujmovicEGJMM20}. While this point of view has proved particularly fruitful, and has since been extended and refined for numerous graph classes beyond planarity, see for example,~\cite{DBLP:journals/corr/abs-2001-08860,DBLP:journals/combinatorics/BekosLHK24}, its 
connection with track number has not previously been formulated explicitly in product-structure terms\footnote{At this point, we deem it important to mention that Pupyrev’s proof that planar graphs have track number at most $225$ can, in hindsight, be interpreted as relying on a product-structure viewpoint~\cite{DBLP:journals/jgaa/Pupyrev20}; however, this connection is not made explicit, and the arguments in the proof are technically involved and not readily reusable in that form.}. As a result, the current state of the art has largely remained restricted to planar graphs and has not yet been extended beyond this setting, even though the underlying structural theory is mature enough to support such~extensions.

\medskip\noindent\textbf{Our contribution.} 
In this work, we close this gap in the literature. Our main contribution is an algorithm that constructs a track layout with at most $(2h+1) \cdot r \cdot t$ tracks for the strong product $P^h \boxtimes K_r \boxtimes H$ of the $h$-th power of a path, a complete graph on $r$ vertices, and a graph $H$ admitting a $t$-track layout; see \cref{sec:combinatorial}. 
 An immediate consequence is the following combinatorial result, which constitutes the main body of this work.
\begin{theorem}\label{thm:main}
	For each graph $G$ such that $G \subseteq P^h \boxtimes K_r \boxtimes H$, it holds that $\tn(G) \leq (2h+1) \cdot r \cdot \tn(H)$.
\end{theorem}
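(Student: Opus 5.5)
Since track number is monotone under taking subgraphs (restricting a track layout to a subgraph keeps every track independent and keeps edges non-crossing), it suffices to construct a track layout of the full product $P^h \boxtimes K_r \boxtimes H$ with $(2h+1)\cdot r\cdot t$ tracks, where $t=\tn(H)$. I will give an explicit layout. Let $P=p_1p_2\cdots p_n$, so vertices of the product are triples $(i,a,x)$ with $i\in[n]$, $a\in[r]$ and $x\in V(H)$. Two distinct triples $(i,a,x)$ and $(j,b,y)$ are adjacent iff $|i-j|\le h$ and ($x=y$ or $xy\in E(H)$); the value of $a,b$ is unconstrained. Fix a $t$-track layout of $H$, with track assignment $\tau:V(H)\to[t]$ and a linear order $<_s$ on each track $s$. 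The vertex $(i,a,x)$ goes to the track indexed by $(i \bmod (2h+1),\, a,\, \tau(x))$. This gives exactly $(2h+1)\,r\,t$ tracks. Within a track, vertices are ordered lexicographically: first by $i$, then by the order $<_{\tau(x)}$ of $x$.

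\emph{Independence.} Suppose $(i,a,x)$ and $(j,a,y)$ are distinct, lie in the same track, and are adjacent. Adjacency gives $|i-j|\le h$, and $i\equiv j \pmod{2h+1}$, so $i=j$. Hence $x\neq y$, and adjacency forces $xy\in E(H)$. But $\tau(x)=\tau(y)$, which contradicts independence of the tracks of $H$.

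\emph{Non-crossing.} Take two tracks $T_1=(c_1,a_1,s_1)$ and $T_2=(c_2,a_2,s_2)$. Any edge $(i,a_1,x)(j,a_2,y)$ between them has $j-i\in[-h,h]$ and $j-i\equiv c_2-c_1 \pmod{2h+1}$. Since $[-h,h]$ is a complete residue system modulo $2h+1$, this determines a single shift $\delta$, and every edge between $T_1$ and $T_2$ has the form $(i,\cdot,x)(i+\delta,\cdot,y)$. This is the key step, and the reason for the factor $2h+1$.

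Now compare two such edges with first coordinates $i$ and $i'$.
\begin{itemize}
\item If $i<i'$, then $i+\delta<i'+\delta$. The endpoints appear in the same order on both tracks, so the edges do not cross.
\item If $i=i'$, the relative order on $T_1$ is decided by $x$ versus $x'$ in $<_{s_1}$, and on $T_2$ by $y$ versus $y'$ in $<_{s_2}$.
  \begin{itemize}
  \item If $s_1\neq s_2$, then $x\ne y$, so $xy$ and $x'y'$ are both edges of $H$ between tracks $s_1$ and $s_2$. They do not cross in the layout of $H$, so the product edges do not cross either.
  \item If $s_1=s_2$, then $xy$ cannot be an $H$-edge, so $x=y$ and $x'=y'$. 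Both orders are then the same order $<_{s_1}$, so the edges do not cross.
  \end{itemize}
\end{itemize}

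This yields a valid layout and proves the theorem. I expect the only delicate point to be the uniqueness of the shift $\delta$ for a given pair of residues, which is what makes the lexicographic order consistent on both tracks.
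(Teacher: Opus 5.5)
Your proof is correct. Your layout is in fact the same one the paper builds. The paper first applies its $K_r$-lemma, treating $K_r$ as $P^{r-1}$ and noting that only $r$ of the $2r-1$ residue classes are used. This gives an $r\cdot t$-track layout of $K_r\boxtimes H$ with tracks indexed by $(a,\tau(x))$ and ordered by $<_{\tau(x)}$. It then applies its $P^h$-lemma to $P^h\boxtimes(K_r\boxtimes H)$, which puts $(i,a,x)$ on track $(i\bmod(2h+1),a,\tau(x))$ and orders by $i$ and then by $x$, exactly as you do.

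The difference is in the verification. The paper works modularly. For the path factor it splits the edges into horizontal, vertical, forward and backward ones and checks every pair of classes. It does this in full only for $h=1$ and says general $h$ works ``in the same way''. You instead verify the triple product in one pass. Your observation that $[-h,h]$ is a complete residue system modulo $2h+1$, so all edges between two fixed tracks share one shift $\delta$, is what makes the general-$h$ case go through. After that, your three cases replace the whole edge-type analysis.

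The paper's route gives reusable intermediate bounds, $\tn(P^h\boxtimes H)\le(2h+1)\tn(H)$ and $\tn(K_r\boxtimes H)\le r\,\tn(H)$, and a single subroutine that the implementation calls twice. Yours is shorter and rigorous for every $h$. It also isolates the one property actually used: under the coloring $(i\bmod(2h+1),a)$, the edges of $P^h\boxtimes K_r$ between any two color classes form a non-crossing matching. The same argument therefore gives $\tn(F\boxtimes H)\le k\cdot\tn(H)$ for any graph $F$ with a $k$-track layout in which every vertex has at most one neighbor on each other track.
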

In contrast to several related results in the field, the correctness of our algorithm and proof follows from a surprisingly simple and intuitive argument (almost amounting to a proof by a picture).
Furthermore, the algorithm is straightforward to grasp, its implementation does not require the use of any complex data structure, and it runs in $O(n + h \cdot r \cdot t +  f_t(H))$ time whenever a corresponding product-structure decomposition of the input $n$-vertex graph is provided as part of the input or can be computed in linear time (as is the case, e.g., for planar graphs~\cite{DBLP:conf/swat/BoseMO22}), where $f_t(H)$ is the time needed to compute a $t$-track layout of $H$; see \cref{sec:algorithmics}. 

More importantly, our contribution provides a way to easily translate results from the product-structure literature (which is rather extensive nowadays) into efficient upper bounds on the track number; see~\cref{tbl:results} for a brief summary. It is worth noting that, for planar graphs, the obtained bound matches the current best-known upper bound of $225$ by Pupyrev~\cite{DBLP:journals/jgaa/Pupyrev20}. 
For $1$-planar and optimal $2$-planar graphs, our algorithm yields track layouts with at most $375$ tracks, while for genus-$k$, $k$-planar, $k$-framed, $k$-map, and $k$-string graphs it produces track layouts with a number of tracks that depends solely on $k$, thus establishing new best-known upper bounds on the track number of these graph classes. Another important consequence of our result is that, for every fixed $k$, the $n$-vertex graphs belonging to these classes admit three-dimensional grid drawings with $O(1)\times O(1) \times O(n)$, and hence linear volume~\cite{DBLP:journals/siamcomp/DujmovicMW05}.

The remainder of the paper is organized as follows.
\cref{se:preliminaries} introduces the necessary definitions and notation. \cref{sec:combinatorial} presents the track-layout constructions and proves the main theorem. \cref{sec:algorithmics} describes their algorithmic implementation and analyzes the running time. \cref{sec:conclusions} concludes the paper with final remarks and lists some relevant open problems that stem from our research.

\begin{table}[t]
\centering
\small
\begin{tabular}{|p{3.8cm}|p{3.8cm} c| >{\centering\arraybackslash}p{6.1cm}|}
\hline
\textbf{Graph Class} & \textbf{Product Structure}  & \textbf{Ref.} & \textbf{Track Number Upper Bound} \\
\hline
Planar graphs & \mbox{\vbox to 2.3ex{}$P \boxtimes K_3 \boxtimes H$} \mbox{$H$: Planar $3$-tree} &  \cite{DBLP:conf/focs/DujmovicEGJMM20} & $225$ \\
\hline
\mbox{1-planar graphs}
\mbox{Optimal 2-planar graphs} & \mbox{\vbox to 2.4ex{}$P^2 \boxtimes K_3 \boxtimes H$}  \mbox{$H$: Planar $3$-tree} &  \cite{DBLP:journals/combinatorics/BekosLHK24,DBLP:conf/compgeom/Bekos0R17} & $375$ \\
\hline
Optimal 3-planar graphs & \mbox{\vbox to 2.4ex{}$P^3 \boxtimes K_4 \boxtimes H$}  \mbox{$H$: Planar $3$-tree} &  \cite{DBLP:journals/combinatorics/BekosLHK24,DBLP:conf/compgeom/Bekos0R17} & $700$ \\
\hline
genus-$k$ graphs & \mbox{\vbox to 2.4ex{}$P \boxtimes K_{\max\{2k,3\}} \boxtimes H$} \mbox{$H$: $4$-tree} &  \cite{DBLP:conf/focs/DujmovicEGJMM20} & $15\cdot30^4\cdot \max\{2k,3\}$ \\
\hline
$k$-framed, $k$-map graphs & \mbox{\vbox to 2.6ex{}$P^{\lfloor \frac{k}{2} \rfloor} \boxtimes K_{\max\{3,k-2\}} \boxtimes H$} \mbox{$H$: Planar $3$-tree} &  \cite{DBLP:journals/combinatorics/BekosLHK24} & $25 \cdot \left(2\left\lfloor \frac{k}{2} \right\rfloor + 1\right)\cdot \max\{3,\, k-2\}
$ \\
\hline
$k$-planar graphs & \mbox{\vbox to 2.4ex{}$P \boxtimes K_{18k^2+48k+30} \boxtimes H$} \mbox{$H$: Treewidth $\binom{k+4}{3}$\vtop to 1.3ex{}} &  \cite{DBLP:journals/jctb/DujmovicMW23} & \mbox{$3\cdot(18k^2+48k+30) \cdot (\lambda + 1)(2^{\lambda+1}-2)^{\lambda} $} \mbox{where $\lambda = \binom{k+4}{3}$} \\
\hline
$k$-string graphs & \mbox{\vbox to 2.6ex{}$H \boxtimes P$} \mbox{$H$: Treewidth $O(k^7)$} &  \cite{DBLP:journals/jctb/DujmovicMW23} & $O(2^{k^{14}})$ \\
\hline
\end{tabular}
\caption{Product-structure decompositions and corresponding bounds on the track number derived from \cref{thm:main}. Refer to \cref{se:preliminaries} for the definitions of the considered graph families.}
\label{tbl:results}
\end{table}

\section{Preliminaries}
\label{se:preliminaries}

In this section, we give preliminaries and basic definitions that will be used in the following. 
For standard graph theoretic definitions and notation, refer e.g. to~\cite{DBLP:books/daglib/0030488}.

\subparagraph{Basic definitions.}
A graph is \emph{simple} if it contains neither loops nor multi-edges. 
For any $i\geq 1$, the \emph{$i$-th power} $G^i$ of a graph $G$ is the graph with the same vertex set as $G$, in which two distinct vertices are adjacent if and only if they are at distance at most~$i$ in~$G$. Clearly, $G \subseteq G^i$. The \emph{chromatic number} $\chi(G)$ of a graph $G$ is the smallest integer $k$ such that the vertices of $G$ can be colored with $k$ colors so that no two adjacent vertices receive the same color.

\subparagraph{Graph classes.} For completeness, we briefly define the graph classes listed in \cref{tbl:results}; see, e.g., \cite{DBLP:journals/jctb/DujmovicMW23} for further details. A \emph{planar} graph is a graph that can be embedded in the plane so that no two of its edges cross. More generally, a \emph{genus-$k$}  graph is a graph that can be embedded on an orientable surface of genus $k$ without crossings, while a graph is \emph{$k$-planar} if it admits a drawing in the plane in which each edge is crossed at most $k$ times. An $n$-vertex $k$-planar graph that has the maximum number of edges among all $n$-vertex $k$-planar graphs is called \emph{optimal}. A graph is said to be \emph{$k$-framed} if it admits a drawing in the plane in which the crossing-free edges determine a simple, biconnected graph that spans all vertices and has all faces of size at most $k$, in the interior of which lie all edges involved in crossings.
A \emph{$k$-map} graph is a graph whose vertices correspond to connected regions of the plane, where each region is adjacent to at most $k$ other regions, and edges represent adjacencies between regions. Finally, a graph is a \emph{$k$-string} graph if it is the intersection graph of a family of curves in the plane such that any two curves intersect in at most $k$ points and no three curves pass through the same point.

\subparagraph{Product structure.} 

The \emph{strong product} of two graphs $X$ and $Y$, denoted by $X \boxtimes Y$, is the graph whose vertex set is the Cartesian product of the vertex sets of $X$ and $Y$, such that there exists an edge in $X \boxtimes Y$ between the vertices $\langle x_i,y_k \rangle, \langle x_j,y_\ell \rangle$ if and only if one of the following occurs: (a)~$x_i = x_j$ and $(y_k,y_\ell)$ is an edge of $Y$, (b)~$y_k = y_\ell$ and $(x_i,x_j)$ is an edge of $X$, or (c)~$(x_i,x_j)$ is an edge of $X$ and $(y_k,y_\ell)$ is an edge of $Y$; see \cref{fig:strong}.

\begin{figure}[htb]
	\centering
	\includegraphics[page=2]{figures/strong_product.pdf}
	\caption{The strong product $X \boxtimes Y$ of a planar graph $X$ (red) and a path~$Y$ (green).}
	\label{fig:strong}
\end{figure}

\noindent Equivalently, and more conveniently for our proofs, the graph $X \boxtimes Y$ can be described as follows. 

\begin{enumerate}[label=(\alph*),ref=(\alph*)]
	\item\label{s:hor} For each vertex $y_k$ of $Y$, replace $y_k$ by a copy of $X$, denoted by 
	$y_k[X]$. Then $y_k[X]$ induces a subgraph isomorphic to $X$, whose vertices we denote by $\langle x_i,y_k \rangle$ with $x_i$ being a vertex of $X$.     
	\item For every edge $(y_k,y_\ell)$ of $Y$ with $k<\ell$, we add edges between the copies $y_k[X]$ and $y_\ell[X]$ as follows:
	\begin{enumerate*}[label=(b.\arabic*),ref=(b.\arabic*)]
		\item\label{s:vert}each vertex $\langle x_i,y_k \rangle$ is connected to $\langle x_i,y_\ell \rangle$, and whenever $x_i$ and $x_j$ with $i<j$ are adjacent in $X$, we also connect
		\item\label{s:forward}$\langle x_i,y_k \rangle$ with $\langle x_j,y_\ell \rangle$, and 
		\item\label{s:backward}$\langle x_j,y_k \rangle$ with $\langle x_i,y_\ell \rangle$.
	\end{enumerate*}
\end{enumerate}

\noindent Or equivalently:

\begin{enumerate}[label=(\alph*),ref=(\alph*), start=3]
	\item\label{t:vert} For each vertex $x_i$ of $X$, replace $x_i$ by a copy of $Y$, denoted by 
	$x_i[Y]$. Then $x_i[Y]$ induces a subgraph isomorphic to $Y$, whose vertices we denote by $\langle x_i,y_k \rangle$ with $y_k$ being a vertex of $Y$.     
	\item For every edge $(x_i,x_j)$ of $X$ with $i<j$, we add edges between the copies $x_i[Y]$ and $x_j[Y]$ as follows:
	\begin{enumerate*}[label=(d.\arabic*),ref=(d.\arabic*)]
		\item\label{t:hor}each vertex $\langle x_i,y_k \rangle$ is connected to $\langle x_j,y_k \rangle$, and whenever $y_k$ and $y_\ell$ with $k<\ell$ are adjacent in $Y$, we also connect
		\item\label{t:forward}$\langle x_i,y_k \rangle$ with $\langle x_j,y_\ell \rangle$, and 
		\item\label{t:backward}$\langle x_j,y_k \rangle$ with $\langle x_i,y_\ell \rangle$.
	\end{enumerate*}
\end{enumerate}

The above definitions provide a partition of the edges of $X \boxtimes Y$ into four sets: \ref{s:hor}, \ref{s:vert}, \ref{s:forward} and \ref{s:backward}, or equivalently \ref{t:hor}, \ref{t:vert}, \ref{t:forward} and \ref{t:backward}, respectively. We refer to the edges of these four sets as \emph{horizontal edges} (red in \cref{fig:strong}), \emph{vertical edges} (green in \cref{fig:strong}), \emph{forward edges} (orange in \cref{fig:strong}), and \emph{backward edges} (purple in \cref{fig:strong}), respectively.

\subparagraph{Track layouts.} 

Let $\{V_i:\; 1\le i \le t\}$ be a partition of the vertex set of a graph $G$ such that for every edge $(u,v)$ of $G$, if $u \in V_i$ and $v \in V_j$, then $i \neq j$. Suppose that $\prec_i$ is a total order of $V_i$. Then, the ordered set $(V_i, \prec_i)$ is called a \emph{track} and the ordered partition $\{(V_i, \prec_i)\}^t_{i=1}$ is called a \emph{t-track assignment} of $G$. An \emph{X-crossing} in a track assignment consists of two edges $(u,v)$ and $(u',v')$  such that $u$ and $u'$ are on the same track $V_i$, $v$ and $v'$ are on a different track $V_j$ with $u \prec_i u'$ and $v' \prec_j v$. A \emph{$t$-track layout} is a $t$-track assignment with no X-crossings. 
In other words, a track layout of a graph is a partition of its vertices into a number of tracks, such that the vertices in each track form an independent set and the edges between each pair of tracks form a non-crossing set. The \emph{track number} of a graph $G$, denoted by $\tn(G)$, is the minimum $t$ such that $G$ has a $t$-track layout. 

\section{Track Number of Graphs Having a Product Structure}
\label{sec:combinatorial}

In order to show our main result, we start by providing upper bounds on the track number of subgraphs of the strong product of two graphs, one of which has a special structure. In the case of such a graph being a path, as in the next lemma, our upper bound exploits the fact that the chromatic number of the second power of a path is $3$, i.e., $\chi(P^2) = 3$.

\begin{lemma}\label{lem:path}
	For each graph $G$ such that $G \subseteq P \boxtimes H$, where $P$ is a path, it holds that $\tn(G)\leq 3\tn(H)$. 
\end{lemma}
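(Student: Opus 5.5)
It suffices to prove the bound for $G = P \boxtimes H$ itself, since deleting vertices and edges from a track layout leaves a valid track layout. Let $P = p_1 p_2 \dots p_m$ and fix a $t$-track layout $\{(W_a,\prec_a)\}_{a=1}^t$ of $H$, where $t=\tn(H)$. I will use the description of $P\boxtimes H$ as copies $p_k[H]$, one for each vertex $p_k$ of $P$. The vertex $\langle p_k, x\rangle$ with $x\in W_a$ goes to track $T_{c,a}$, where $c = k \bmod 3 \in \{0,1,2\}$. This color is a proper $3$-coloring of $P^2$. Within $T_{c,a}$, order vertices first by the path index $k$ and then, for equal $k$, by $\prec_a$. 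This gives $3t$ tracks.

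\textbf{Independence.} Consider an edge $\langle p_k,x\rangle\langle p_\ell,y\rangle$. If $k=\ell$, then $xy$ is an edge of $H$, so $x$ and $y$ lie on different tracks of $H$. If $|k-\ell|=1$, then the two vertices get different classes mod 3. So every track is an independent set.

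\textbf{No X-crossings.} Take two tracks $T_{c,a}$ and $T_{c',b}$ and two edges between them, $\langle p_k,x\rangle\langle p_\ell,y\rangle$ and $\langle p_{k'},x'\rangle\langle p_{\ell'},y'\rangle$, with the first endpoints on $T_{c,a}$. The key point is that $c$ and $c'$ fix the offset $\ell-k\in\{-1,0,1\}$. The three values give three residues mod $3$, so the offset is determined by $c'-c \bmod 3$, and both edges have the same offset $\delta$.

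\begin{itemize}
\item \textbf{Case $\delta=0$.} Both edges are horizontal edges inside copies of $H$, with $x,x'\in W_a$ and $y,y'\in W_b$. If $k\neq k'$, the order by path index makes the two edges nested in the same direction on both tracks, so they do not cross. If $k=k'$, the orders on both tracks are $\prec_a$ and $\prec_b$. The edges $xy$ and $x'y'$ are edges of $H$ between $W_a$ and $W_b$, so they do not cross in the layout of $H$.
\item \textbf{Case $\delta=\pm1$.} If $k<k'$, then $\ell=k+\delta<k'+\delta=\ell'$, so there is no crossing. If $k=k'$, then $\ell=\ell'$ and the order within each copy is by $\prec_a$ and $\prec_b$. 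The projection of each edge to $H$ is either an edge of $H$ between $W_a$ and $W_b$, or, for a vertical edge, a single vertex $x=y$. A vertical edge has $x=y$, so it forces $a=b$.
\end{itemize}

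The main obstacle is the mixed situation in the case $\delta=\pm1$ when $a=b$. Here vertical edges (with $x=y$) and forward or backward edges could share the same pair of tracks. Forward or backward edges then project to edges of $H$ inside a single $W_a$, which cannot exist because $W_a$ is independent. So when $a=b$, all edges between the two tracks are vertical, and they preserve the order $\prec_a$. When $a\neq b$, all edges project to edges of $H$ between $W_a$ and $W_b$, and these are non-crossing because the layout of $H$ has no X-crossings. In both cases there is no X-crossing, which gives $\tn(G)\le 3\tn(H)$.
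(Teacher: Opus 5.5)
Your proof is correct and follows the paper's approach. You use the same $3t$-track assignment: copy $k$ of the layout of $H$ is placed on the tracks indexed by $k \bmod 3$, ordered first by path index and then by the order in $H$. The verification rests on the same two facts the paper uses: distinct residues give distinct tracks, and copies with equal residue are ordered by path index.

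The paper checks pairs of edge types one by one. You instead organize by pairs of tracks: the residue difference fixes the offset $\delta$, and $a=b$ versus $a\neq b$ separates the matching edges from the edges that project to edges of $H$. This is a more compact packaging of the same case analysis. It also treats forward and backward edges between the same two copies uniformly, via their projections to non-crossing edges of $H$. That is cleaner and more convincing than the paper's forward--backward subcase, which asserts that the $H$-coordinates must coincide rather than invoking non-crossing in the layout of $H$.
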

\begin{proof}
	Let $\mathcal{L}$ be a track layout of $H$ with $t=\tn(H)$ tracks and let $P=(x_1,\dots,x_p)$.
	We define a $3t$-track layout $\Psi$ for $H \boxtimes P$ as follows; refer to \cref{fig:kn}. 
	Let $L_0,\ldots,L_{3t-1}$ be the tracks of the resulting layout, which initially are empty.
	For each vertex $x_i$ of $P$ and $q = i \!\mod 3$, we create a copy of $\mathcal{L}$, which we denote by $\mathcal{L}_i$, at the following $t$ tracks: $L_q,L_{q+3},\ldots,L_{q+3(t-1)}$. Obviously, this copy defines a track layout of $x_i[H]$. In order to define a total order of the vertices assigned to each track, for each pair of copies $\mathcal{L}_i$ and $\mathcal{L}_j$ assigned to the same $t$ tracks (that is, $i \equiv j \mod 3$), we additionally require that, in each of these $t$ tracks, the vertices of $\mathcal{L}_i$ 
	precede all vertices of $\mathcal{L}_j$ if and only if $i < j$. Hence, the obtained track assignment $\Psi$ of $P \boxtimes H$ has the following properties:  
	\begin{enumerate}[label=P.\arabic*,ref=P.\arabic*]
		\item\label{p:different-tracks}For each $i\in\{1,\ldots,p\}$ and $j\in\{1,\ldots,p\}$ with $i \not\equiv j \mod 3$, the vertices of $x_i[H]$ are assigned to different tracks from the vertices of $x_{j}[H]$, and 
		\item\label{p:common-tracks}For each $i\in\{1,\ldots,p\}$ and $j\in\{1,\ldots,p\}$ with $i \neq j$ and $i \equiv j \mod 3$, the vertices of $x_i[H]$ share the same tracks as the vertices of $x_j[H]$. Furthermore, if $i<j$, in each of these tracks, the vertices of $x_i[H]$ precede the vertices of $x_j[H]$.
	\end{enumerate}
	Observe that \cref{p:different-tracks} implies that for each $i \in \{2,\ldots,p-1\}$, the vertices of $x_i[H]$ are assigned to different tracks from the vertices of each of $x_{i-1}[H]$ and $x_{i+1}[H]$. %
	To show that $\Psi$ is a track layout of ${P \boxtimes H}$, it remains to argue that it contains no $X$-crossing.
	First, we show that no pair of edges belonging to the same set (horizontal, vertical, forward, and backward) form an $X$-crossing.
	
	\begin{figure}
		\centering
		\includegraphics[page=3]{figures/strong_product.pdf}
		\caption{Illustration for the proof of \cref{lem:path}.}
		\label{fig:kn}
	\end{figure}
	
	\begin{itemize}
		\item \textit{No two vertical edges are involved in an $X$-crossing} (refer to the green edges of \cref{fig:kn}): By \cref{p:different-tracks}, an $X$-crossing among a pair of vertical edges $e$ and $e'$ could exist only if $e$ connects two vertices in $x_i[H]$, $e'$ connects two vertices in $x_j[H]$, and $i \equiv j \mod 3$ holds, as otherwise $e$ and $e'$ do not have their endpoints on the same two tracks. If $i=j$, such an $X$-crossing is not possible, since the vertices belonging to the same copy of $H$ are ordered within each track as in $\mathcal{L}$. Otherwise, if we assume w.l.o.g.\ that $i < j$, then all vertices of $x_i[H]$ precede all vertices of $x_j[H]$ in each relevant track of $\Psi$ (by \cref{p:common-tracks}). Hence, again $e$ and $e'$ cannot form an $X$-crossing.
		
		\item \textit{No two horizontal edges are involved in an $X$-crossing} (refer to the red edges of \cref{fig:kn}): By \cref{p:different-tracks}, an $X$-crossing among a pair of horizontal edges $e$ and $e'$ could exist only if $e$ connects a vertex of $x_i[H]$ to a vertex in $x_{i+1}[H]$, $e'$ connects a vertex in $x_j[H]$ to a vertex in $x_{j+1}[H]$, and $i \equiv j \mod 3$, as otherwise $e$ and $e'$ do not have their endpoints on the same two tracks. If $i=j$, such an $X$-crossing is not possible, because the horizontal edges connecting vertices in $x_i[H]$ and $x_{i+1}[H]$ (or equivalently in $x_j[H]$ and $x_{j+1}[H]$) form a matching and their endpoints have the same relative order in $\mathcal{L}_i$ and $\mathcal{L}_{i+1}$. Otherwise, assuming w.l.o.g.\ that $i < j$, we know that the endpoint of $e$ in $x_i[H]$ precedes the endpoint of $e'$ in $x_j[H]$, while the endpoint of $e$ in $x_{i+1}[H]$ precedes the endpoint of $e'$ in $x_{j+1}[H]$ (by \cref{p:common-tracks}). Therefore, $e$ and $e'$ cannot form an $X$-crossing. 
		
		\item \textit{No two forward edges are involved in an $X$-crossing} (refer to the orange edges of \cref{fig:kn}): By \cref{p:different-tracks}, an $X$-crossing among a pair of forward edges $e$ and $e'$ could exist only if $e$ connects a vertex of $x_i[H]$ to a vertex in $x_{i+1}[H]$, $e'$ connects a vertex in $x_j[H]$ to a vertex in $x_{j+1}[H]$, and $j \equiv i \mod 3$, as otherwise $e$ and $e'$ do not have their endpoints on the same two tracks. We first argue that if $i=j$, then $e$ and $e'$ cannot form an $X$-crossing. To see this, assume for a contradiction that $e$ and $e'$ are two such forward edges that form an $X$-crossing. Since each of them connect vertices in $x_i[H]$ and $x_{i+1}[H]$ (or equivalently in $x_j[H]$ and $x_{j+1}[H]$), we can assume w.l.o.g. that $e=(\langle x_i, y_k\rangle, \langle x_{i+1},y_\ell\rangle)$ and $e'=(\langle x_i, y_{k'}\rangle, \langle x_{i+1},y_{\ell'}\rangle)$.
		W.l.o.g., we assume that $\langle x_i, y_k\rangle \prec \langle x_i, y_{k'}\rangle$, which implies that, in the track layout $\mathcal{L}$ of $H$, it holds that $y_k$ precedes $y_{k'}$. Since $e$ and $e'$ form an $X$-crossing, it follows that $\langle x_{i+1},y_{\ell'}\rangle$ precedes $\langle x_{i+1},y_{\ell}\rangle$, which implies that, in the track layout $\mathcal{L}$ of $H$, it holds that $y_{\ell'}$ precedes $y_{\ell}$. However, in this scenario the edges $(y_k,y_\ell)$ and $(y_{k'},y_{\ell'})$ of $H$ form an $X$-crossing in $\mathcal{L}$; a contradiction. Hence, $i \neq j$ holds. In this case, however, $e$ and $e'$ cannot form an $X$-crossing either, since we know that the endpoint of $e$ in $x_i[H]$ precedes the endpoint of $e'$ in $x_j[H]$ if and only if the endpoint of $e$ in $x_{i+1}[H]$ precedes the endpoint of $e'$ in $x_{j+1}[H]$ (by \cref{p:common-tracks}).
		\item \textit{No two backward edges are involved in an $X$-crossing} (refer to the purple edges of \cref{fig:kn}): The argument here is symmetric to the one of forward edges.
	\end{itemize}
	
	\noindent Second, we show that no pair of edges belonging to distinct sets forms an $X$-crossing in $\Psi$.
	
	\begin{itemize}
		\item \textit{No horizontal edge is involved in an $X$-crossing with any vertical, forward, or backward edge}: Both endpoints of a horizontal edge belong to tracks $L_{3i}$, $L_{3i+1}$, and $L_{3i+2}$, for some $i \in \{0,\dots,t-1\}$. Since each vertical, forward, or backward edge has at most one endpoint in such tracks, an $X$-crossing cannot be formed in this case.
		
		\item \textit{No vertical edge is involved in an $X$-crossing with a forward or a backward edge}:
		To see this, assume for a contradiction that $e$ and $e'$ are two such independent edges that form an $X$-crossing such that $e$ is vertical and $e'$ is forward; the case in which $e'$ is backward is symmetric. Since $e$ is vertical, both its endpoints belong to some graph $x_i[H]$. So, we can assume w.l.o.g. that $e=(\langle x_i, y_k\rangle, \langle x_i,y_\ell\rangle)$. 
		Since $e'$ is forward and since $e'$ forms an $X$-crossing with $e$, we can assume, by the total order of vertices on the tracks of $\Psi$, that $e'=(\langle x_i, y_{k'}\rangle, \langle x_{i+1},y_{\ell'}\rangle)$ or $e'=(\langle x_{i-1}, y_{k'}\rangle, \langle x_i,y_{\ell'}\rangle)$. W.l.o.g.\ assume the former case; the latter one is symmetric. Since $e$ and $e'$ cross, the vertices $\langle x_i, y_k\rangle$ and $\langle x_i, y_{k'}\rangle$ must be on the same track.  
		Since $e$ is vertical, vertex $\langle x_i,y_\ell\rangle$ belongs to $x_i[H]$. On the other hand, since $e'$ is forward, vertex $\langle x_{i+1},y_{\ell'}\rangle$ belongs to $x_{i+1}[H]$. By \cref{p:different-tracks}, it follows that $\langle x_i,y_\ell\rangle$ and $\langle x_{i+1},y_{\ell'}\rangle$ are on different tracks. Hence, $e$ and $e'$ cannot form an $X$-crossing.
		
		\item \textit{A forward edge cannot form an $X$-crossing with a backward edge}: By \cref{p:different-tracks}, an $X$-crossing among such a pair of edges $e$ and $e'$ with $e$ being forward and $e'$ being backward exists only if $e$ connects a vertex of $x_i[H]$ to a vertex in $x_{i+1}[H]$, $e'$ connects a vertex in $x_j[H]$ to a vertex in $x_{j+1}[H]$, and $j \equiv i \mod 3$, as otherwise $e$ and $e'$ do not have their endpoints on the same two tracks. Assume first that $i=j$. Since $e$ is forward it connects $\langle x_i,y_k \rangle$ with $\langle x_{i+1},y_\ell \rangle$ with $k < \ell$. Symmetrically, since $e'$ is backward it connects $\langle x_{i+1},y_{k'} \rangle$ with $\langle x_i,y_{\ell'} \rangle$  with $k' < \ell'$. For $e$ and $e'$ to form an $X$-crossing, $k=\ell'$ and $\ell = k'$ must hold by our track assignment. However, in this case we obtain $k < \ell = k' < \ell' = k$, which is a contradiction.  Hence, $i \neq j$ holds. In this case, however, $e$ and $e'$ cannot form an $X$-crossing either, since we know that the endpoint of $e$ in $x_i[H]$ precedes the endpoint of $e'$ in $x_j[H]$ if and only if the endpoint of $e$ in $x_{i+1}[H]$ precedes the endpoint of $e'$ in $x_{j+1}[H]$ (by \cref{p:common-tracks}).
	\end{itemize}
	It follows that $\Psi$ is a $3t$-track layout of $P \boxtimes H$, completing the proof.
\end{proof}

A straightforward generalization of \Cref{lem:path} is the following \Cref{lem:power} in which we similarly exploit the fact that the chromatic number of the $2h$-th power of a path is $2h+1$, i.e., $\chi(P^{2h}) = 2h+1$.

\begin{lemma}\label{lem:power}
	For each graph $G$ such that $G \subseteq P^h \boxtimes H$, it holds that $\tn(G)\leq (2h+1)\cdot\tn(H)$.
\end{lemma}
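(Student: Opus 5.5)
We mimic the construction in the proof of \cref{lem:path}, replacing the period $3$ by $2h+1$. Let $\mathcal{L}$ be a $t$-track layout of $H$ with $t=\tn(H)$, and let $P=(x_1,\dots,x_p)$. It suffices to lay out $P^h \boxtimes H$ itself, since deleting vertices and edges preserves track layouts.

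\textbf{Construction.} We use tracks $L_0,\dots,L_{(2h+1)t-1}$. For each $i$, set $q=i \bmod (2h+1)$ and place a copy $\mathcal{L}_i$ of $\mathcal{L}$ on the tracks $L_q,L_{q+(2h+1)},\dots,L_{q+(2h+1)(t-1)}$. Within each track, the vertices of $\mathcal{L}_i$ precede those of $\mathcal{L}_j$ whenever $i<j$ and $i\equiv j \pmod{2h+1}$. This gives analogues of \cref{p:different-tracks} and \cref{p:common-tracks} with modulus $2h+1$. The choice of modulus is what makes things work: in $P^h\boxtimes H$, edges join $x_i[H]$ only to $x_j[H]$ with $|i-j|\le h$. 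So two copies that share tracks, i.e.\ $i\equiv j \pmod{2h+1}$ with $i\ne j$, satisfy $|i-j|\ge 2h+1$. Hence no edge joins them, and more strongly their "neighbourhood windows" $[i-h,i+h]$ and $[j-h,j+h]$ are disjoint. Since $\chi(P^{2h})=2h+1$, the copies $x_i[H]$ are properly separated. Note that $2h+1$ rather than $h+1$ colours are needed for the order-preservation argument below.

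\textbf{Crossing analysis.} Consider two edges $e,e'$ whose endpoints lie on the same pair of tracks. Say $e$ joins $x_a[H]$ and $x_b[H]$ and $e'$ joins $x_{a'}[H]$ and $x_{b'}[H]$, with $a\equiv a'$ and $b\equiv b' \pmod{2h+1}$. Since $|a-b|,|a'-b'|\le h$, the residue difference $b-a \pmod{2h+1}$ determines $b-a$ as an integer in $[-h,h]$. Hence $b'-a'=b-a=:d$.

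\emph{Case $a\ne a'$.} Say $a<a'$. Then $b=a+d<a'+d=b'$. By the ordering rule, $e$ precedes $e'$ on both tracks, so there is no crossing.

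\emph{Case $a=a'$.} Then $b=b'$, and both edges lie between the same two copies. If $d=0$, both are edges inside one copy of $H$, laid out as in $\mathcal{L}$, so there is no crossing. If $d\ne 0$, each edge between $x_a[H]$ and $x_b[H]$ is a vertical, forward or backward edge, i.e.\ of the form $(\langle x_a,y\rangle,\langle x_b,y'\rangle)$ with $y=y'$ or $yy'\in E(H)$. The pairwise arguments of \cref{lem:path} apply verbatim, since they only used that the two copies are ordered identically by $\mathcal{L}$:
\begin{itemize}
\item vertical edges form an order-preserving matching;
\item two forward (or two backward) edges crossing would give an $X$-crossing of $\mathcal{L}$;
\item a vertical edge $(\langle x_a,y\rangle,\langle x_b,y\rangle)$ and an edge $(\langle x_a,y_1\rangle,\langle x_b,y_2\rangle)$ with $y_1,y_2$ on the tracks of $y$ cannot cross, because $y_1,y_2$ and $y$ share a track, so $y_1y_2$ is not an edge of $H$;
\item a forward and a backward edge crossing forces the contradiction $k<\ell=k'<\ell'=k$, exactly as before.
\end{itemize}

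The only point needing care is that, unlike \cref{lem:path}, copies at distance up to $h$ now interact, so there are more "directions" $d$. I handle this by the uniform parameter $d$ rather than the horizontal/vertical case split. I expect the main obstacle to be the bookkeeping that the residue class of $b-a$ pins down $d$. This is exactly where the modulus $2h+1$ is required.
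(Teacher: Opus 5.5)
Your construction is the paper's (copy $\mathcal{L}_i$ on the tracks of residue $i \bmod (2h+1)$, ordered by path index and then by $\mathcal{L}$), and your proof is correct. The difference is in how you rule out X-crossings. The paper's proof of this lemma only says that P.3 and P.4 are used \emph{in the same way} as in \cref{lem:path}. That earlier proof is a case analysis over pairs of edge types (horizontal, vertical, forward, backward), each split into a same-copy and a different-copy subcase. You organize by copy pairs instead. Since $\{-h,\dots,h\}$ is a complete residue system modulo $2h+1$, the two tracks fix the offset $d=b-a$. For $a\neq a'$ the lexicographic order then settles all edge types at once, and for $a=a'$ everything reduces to $\mathcal{L}$. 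This spells out the fact the paper leaves implicit when passing from $h=1$ to general $h$. It also shows why $2h+1$ residues are needed, rather than the $h+1$ that would already make every track independent.

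One step needs repair. Your last bullet imports the argument $k<\ell=k'<\ell'=k$ from the proof of \cref{lem:path}. Consider a forward edge $(\langle x_a,y_k\rangle,\langle x_b,y_\ell\rangle)$ and a backward edge $(\langle x_b,y_{k'}\rangle,\langle x_a,y_{\ell'}\rangle)$. A crossing between them does not force $k=\ell'$ and $\ell=k'$; in fact an X-crossing requires $y_k\neq y_{\ell'}$. It only forces $y_k,y_{\ell'}$ onto a common track of $\mathcal{L}$ and $y_\ell,y_{k'}$ onto another. The valid justification is the one in your second bullet: such a crossing would make $(y_k,y_\ell)$ and $(y_{\ell'},y_{k'})$ an X-crossing in $\mathcal{L}$. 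So once $a,b$ are fixed, the forward/backward distinction is irrelevant, and your second and fourth bullets merge into one. Also, the edges you call vertical (equal $H$-coordinates) are what the paper calls horizontal.
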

\begin{proof}
    As in the proof of \cref{lem:path}, let $\mathcal{L}$ be a track layout of $H$ with $t=\tn(H)$ tracks and let $(x_1,\dots,x_p)$ denote the path $P$.
	We define a $(2h+1)t\,$-track layout $\Psi$ for $P^h \boxtimes H$ as follows; refer to \cref{fig:kn} with $h=1$. 
	Let $L_0,\ldots,L_{(2h+1)t-1}$ be the tracks of the resulting layout, which initially are empty.
	For each vertex $x_i$ of $P$ with 
    $q = i \!\mod (2h+1)$, 
    we create a copy of $\mathcal{L}$, which we denote by $\mathcal{L}_i$, at the following $t$ tracks: $L_q,L_{q+(2h+1)},\ldots,L_{q+(2h+1)(t-1)}$. 
    Obviously, this copy defines a track layout of $x_i[H]$. In order to define a total order of the vertices assigned to each track, for each pair of copies $\mathcal{L}_i$ and $\mathcal{L}_j$ assigned to the same $t$ tracks (that is, $i \equiv j \mod (2h+1)$), we additionally require that, in each of these $t$ tracks, the vertices of $\mathcal{L}_i$ 
	precede all vertices of $\mathcal{L}_j$ if and only if $i < j$. Hence, the obtained track assignment $\Psi$ of $P^h \boxtimes H$ has the following properties:  
	\begin{enumerate}[label=P.\arabic*,ref=P.\arabic*,start=3]
		\item\label{mp:different-tracks}For each $i\in\{1,\ldots,p\}$ and $j\in\{1,\ldots,p\}$ with $i \not\equiv j \mod (2h+1)$, the vertices of $x_i[H]$ are assigned to different tracks from the vertices of $x_{j}[H]$, and 
		\item\label{mp:common-tracks}For each $i\in\{1,\ldots,p\}$ and $j\in\{1,\ldots,p\}$ with $i \neq j$ and $i \equiv j \mod (2h+1)$, the vertices of $x_i[H]$ share the same tracks as the vertices of $x_j[H]$. Furthermore, if $i<j$, in each of these tracks, the vertices of $x_i[H]$ precede the vertices of $x_j[H]$.
	\end{enumerate}
	Observe that \cref{mp:different-tracks} implies that for each $i \in \{h+1,\ldots,p-h\}$, the vertices of $x_i[H]$ are assigned to different tracks from the vertices of each of $x_{i-h}[H],\ldots,x_{i-1}[H]$ and $x_{i+1}[H],\ldots,x_{i+h}[H]$.	
    With this observation, the proof that $\Psi$ is a track layout of ${P^h \boxtimes H}$, i.e., that the designed track assignment and the vertex orders do not produce any $X$-crossing, exploits \cref{mp:different-tracks,mp:common-tracks} in the same way as \cref{p:different-tracks,p:common-tracks}, respectively, are used in the proof of \cref{lem:path}.
\end{proof}

\begin{figure}[t]
    \centering
    \includegraphics[page=4]{figures/strong_product.pdf}
    \caption{Illustration for the proof of \cref{lem:complete}.}
    \label{fig:complete}
\end{figure}

\begin{lemma}\label{lem:complete}
	For each graph $G$ such that $G \subseteq K_r \boxtimes H$, it holds that $\tn(G)\leq r \cdot\tn(H)$.    
\end{lemma}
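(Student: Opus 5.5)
It suffices to construct an $(r\cdot t)$-track layout $\Psi$ of $K_r \boxtimes H$ itself, where $t=\tn(H)$. Track layouts are inherited by subgraphs: deleting vertices and edges can neither break independence of tracks nor create $X$-crossings. Let $\mathcal{L}$ be a $t$-track layout of $H$ with tracks $T_1,\dots,T_t$, and let $K_r=(x_1,\dots,x_r)$.

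For each $x_i$ we place a copy $\mathcal{L}_i$ of $\mathcal{L}$ on its own block of $t$ fresh tracks $L_{i,1},\dots,L_{i,t}$. The vertex $\langle x_i,y\rangle$ goes to $L_{i,s}$ whenever $y\in T_s$, and it is ordered as in $\mathcal{L}$. This gives $rt$ tracks. Since the blocks are pairwise disjoint, each track contains vertices of only one copy $x_i[H]$, so no further ordering between copies needs to be specified.

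\textbf{Independence of tracks.} Take an edge with both endpoints on the same track $L_{i,s}$. Both endpoints then lie in $x_i[H]$, so the edge is a vertical edge, i.e.\ a copy of an edge of $H$ inside $T_s$. This contradicts the fact that $T_s$ is independent in $\mathcal{L}$. Horizontal, forward and backward edges always join different copies and hence different blocks.

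\textbf{No $X$-crossings.} Consider two edges $e,e'$ between tracks $L_{i,s}$ and $L_{j,s'}$, and project each vertex $\langle x_a,y\rangle$ to its $H$-coordinate $y$. The orders on $L_{i,s}$ and on $L_{j,s'}$ are exactly the orders of $T_s$ and $T_{s'}$ in $\mathcal{L}$, so an $X$-crossing of $e,e'$ in $\Psi$ yields a crossing pattern of their projections between $T_s$ and $T_{s'}$. There are two cases.
\begin{itemize}
\item If $i=j$, both edges are vertical. Their projections are edges of $H$ between $T_s$ and $T_{s'}$ that cross in $\mathcal{L}$, which is a contradiction.
\item If $i\neq j$, every edge between the two copies is of the form $(\langle x_i,y\rangle,\langle x_j,y'\rangle)$ with $y=y'$ or $yy'\in E(H)$. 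For a horizontal edge ($y=y'$) this forces $s=s'$. For a forward or backward edge ($yy'\in E(H)$) it forces $s\neq s'$, since $y$ and $y'$ are adjacent and therefore lie on different tracks of $\mathcal{L}$. Hence $e$ and $e'$ are either both horizontal or both non-horizontal.
\end{itemize}

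In the horizontal subcase with $i\neq j$, the edges between the two tracks form the identity matching $y\mapsto y$ between two copies of the same ordered track, so they cannot cross. In the non-horizontal subcase with $s\neq s'$, the projections are edges of $H$ between $T_s$ and $T_{s'}$, and an $X$-crossing would transfer to one in $\mathcal{L}$.

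The one delicate point is this last case. The two projections might share an endpoint, but then $e$ and $e'$ share an endpoint track position in $\Psi$ and do not form an $X$-crossing, which requires strict inequalities. Forward and backward edges between the same pair of copies, for instance $(\langle x_i,y\rangle,\langle x_j,y'\rangle)$ and $(\langle x_i,y'\rangle,\langle x_j,y\rangle)$, lie on different track pairs, since $y$ and $y'$ are on different tracks; the projection argument handles all remaining mixed pairs uniformly.

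In short, because $K_r$ is complete we cannot reuse tracks among copies as in \cref{lem:path}. We pay a factor $r$ instead, and in exchange every crossing reduces to a crossing in $\mathcal{L}$ through the projection onto $H$.
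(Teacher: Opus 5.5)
Your proof is correct, and the layout you build is exactly the one the paper ends up with; what differs is how correctness is certified. The paper takes a path $P=(x_1,\dots,x_r)$ and notes $P^{r-1}\simeq K_r$. It then applies the $(2h+1)t$-track construction of \cref{lem:power} with $h=r-1$. The residues $1,\dots,r$ modulo $2h+1=2r-1$ are pairwise distinct, so every copy $x_i[H]$ gets its own block of $t$ tracks, and $h$ of the $2h+1$ blocks stay empty, leaving $r\cdot t$ tracks. Correctness is then inherited from \cref{lem:power}, whose proof is itself only sketched by analogy with the edge-type case analysis of \cref{lem:path}. You instead verify the layout from scratch. Because your blocks are disjoint, the ordering condition for distinct copies sharing tracks (\cref{mp:common-tracks}) never arises. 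Everything then rests on your observation that each pair of tracks carries either only horizontal edges (an identity matching) or only edges that project injectively onto edges of $H$ between two distinct tracks of $\mathcal{L}$. This single projection step treats vertical, forward and backward edges uniformly, so you avoid the type-by-type case distinctions of \cref{lem:path}. The paper's route is shorter and shows that \cref{lem:complete} is a degenerate instance of \cref{lem:power}, which is what lets \cref{alg:path-track-layout} serve both lemmas in \cref{sec:algorithmics}. Your route gives a self-contained argument that does not depend on the details of the general $P^h$ case.
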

\begin{proof}
    Let $h=r-1$ and $P=(x_1,\ldots,x_r)$ be a path on $r$ vertices. Then $P^h\simeq K_r$ and so, up to isomorphism, $G \subseteq P^h \boxtimes H$. We consider the $(2h+1)t\,$-track layout $\Psi$ constructed for the graph $P^h \boxtimes H$ in the proof of \Cref{lem:power}. Since in this case $P$ has only $r$ vertices, the construction of \Cref{lem:power} leaves empty all $h$ subcollections of tracks $L_q,L_{q+(2h+1)},\ldots,L_{q+(2h+1)(t-1)}$ where $q=0$ or $r=h+1<q<2h+1=r+h$ (see \cref{fig:complete}). In other words, $\Psi$ is now a $(2h+1-h)t\,$-track layout where $2h+1-h=r$, and hence $\tn(G)\leq r \cdot\tn(H)$, completing the proof.   
\end{proof}
\Cref{lem:complete,lem:power} together imply \cref{thm:main}, which constitutes the main result of this work. 
All bounds presented in \cref{tbl:results} are obtained as a corollary of \cref{thm:main} using the cited product structure theorems from the literature.

\section{Implementation Details}
\label{sec:algorithmics}
%
\begin{algorithm}
	[tb!]
	\caption{Construction of the track assignment $\Psi$ for \cref{lem:path,lem:power}.}
	\label{alg:path-track-layout}
	\begin{algorithmic}
		[1] \Require A graph $G\subseteq P^{h}\boxtimes H$, where
		$P=(x_{1},\ldots,x_{p})$, a $t$-track layout $\mathcal{L}=\{(T_{a},\prec_{a}
		):0\leq a<t\}$ of $H$, and the integer $h$ 
        \Ensure A $(2h+1)t$-track assignment $\Psi=\{(L_{\ell},\prec
		_{\ell}):0\leq\ell<(2h+1)t\}$ of $G$
        \Statex
        \For{$\ell=0,\ldots,(2h+1)t-1$}\label{lst:line:1} \State \label{lst:line:2}Initialize $L_{\ell}$ as an empty ordered list \EndFor 
        \ForAll{$y\in V(H)$} \label{lst:line:3} \State Initialize
		$C_{y}$ as an empty list \EndFor 
        \For{$i=1,\ldots,p$} \State Initialize $\mathcal{A}_{i}$ as an empty list \EndFor 
        \ForAll{$v\in V(G)$} \State Append $v$ to
		$C_{v.y}$ \EndFor 
        \For{$a=0,\ldots,t-1$} \ForAll{$y\in T_{a}$, taken according to $\prec_{a}$}
		\ForAll{$v\in C_{y}$} \State Let $v.x=x_{i}$ \If{$B_{i,a}$ has not yet been created}
		\State Initialize $B_{i,a}$ as an empty ordered list \State Append $a$ to $\mathcal{A}
		_{i}$ \EndIf \State Append $v$ to $B_{i,a}$ \EndFor \EndFor \EndFor \For{$i=1,\ldots,p$}
		\ForAll{$a\in\mathcal{A}_{i}$} \State $\ell\gets (2h+1)a+(i\bmod (2h+1))$ \State Append
		the ordered list $B_{i,a}$ to the end of $L_{\ell}$ \label{lst:line:4}\EndFor \EndFor \For{$\ell=0,\ldots,(2h+1)t-1$} \label{lst:line:5}
		\State Let $\prec_{\ell}$ be the order of the vertices in $L_{\ell}$\label{lst:line:6} \EndFor
		\State \Return
		\[
			\Psi=\{(L_{\ell},\prec_{\ell}):0\leq\ell<(2h+1)t\}
		\]
	\end{algorithmic}
\end{algorithm}
We now present an explicit algorithmic description of the construction of the track assignment $\Psi$ from \cref{lem:power}. When $h=1$, this construction yields the track assignment of \cref{lem:path}, whereas when $h=r-1$ and $p=r$, it specializes the track assignment to the one of \cref{lem:complete}. 
The same construction is also presented by \cref{alg:path-track-layout}.

Let $G\subseteq P^h\boxtimes H$ be an $n$-vertex graph, let $P=(x_1,\ldots,x_p)$, and let $\mathcal L=\{(T_a,\prec_a):0\leq a<t\}$  be a $t$-track layout of $H$. 
Since $G\subseteq P^h\boxtimes H$, every vertex $v\in V(G)$ corresponds to a unique ordered pair $\langle x_i, y_j\rangle$, where $x_i\in V(P)$ and $y_j\in V(H)$. We assume that $v$ stores these two coordinates in the fields $v.x$ and $v.y$, respectively. Thus, $v.x=x_i$ and $v.y=y_j$. Number the tracks of $\mathcal L$ from $0$ to $t-1$, and let $t_{\mathcal L}\colon V(H)\to\{0,\ldots,t-1\}$ be the function that maps each vertex $y\in V(H)$ to the index of the track containing $y$ in $\mathcal L$, that is, $T_a = \{y \in V(H): t_{\mathcal L}(y) = a\}$.   
Also, for every vertex $x_i$ of $P$, define $\operatorname{ind}_P(x_i):=i$. The 
function that maps each vertex $v \in V(G)$ to the index of the track containing $v$ in $\Psi$, which according to \cref{lem:power}, 
is then \[ t_\Psi(v) :=(2h+1)\,t_{\mathcal L}(v.y) +\bigl(\operatorname{ind}_P(v.x)\bmod (2h+1)\bigr). \] Consequently, for each $\ell\in\{0,\ldots,(2h+1)t-1\}$, the track $L_\ell$ of $\Psi$ contains precisely the vertices $v\in V(G)$ satisfying $L_\ell := \left\{ v\in V(G): 
t_\Psi(v) =\ell \right\}$.

It remains to define the total order of the vertices assigned to each track. For every $i\in\{1,\ldots,p\}$ and $a\in\{0,\ldots,t-1\}$, let \[ B_{i,a} = \left\{ v\in V(G): \operatorname{ind}_P(v.x) =i \text{ and } t_{\mathcal L}(v.y)=a \right\}. \] We refer to $B_{i,a}$ as a \emph{bucket}. Observe that $L_\ell = \bigcup B_{i,a}$, such that $\ell = (2h+1) \cdot a + (i \mod (2h+1))$, with $0 \leq a < t$ and $1 \leq i \leq p$.
The vertices of $B_{i,a}$ are ordered according to the order of their $H$-coordinates on the track $T_a$. 
More precisely, for any two vertices $u,v\in B_{i,a}$, we place $u$ before $v$ if and only if \[ u.y\prec_a v.y. \] The buckets can be constructed without sorting. To this aim, for each vertex $y\in V(H)$, we first construct the list \[ C_y=\{v\in V(G):v.y=y\}. \] We then scan each track $T_a$ according to its order $\prec_a$. Whenever a vertex $y\in T_a$ is visited, every vertex $v\in C_y$, with $v.x=x_i$, is appended to the bucket $B_{i,a}$. 
Since the vertices of $T_a$ are processed according to $\prec_a$, the vertices in every bucket $B_{i,a}$ are automatically stored in the desired order. Finally, we process the vertices $x_1,\ldots,x_p$ in their order along $P$. 
For every nonempty bucket $B_{i,a}$, we append all its vertices, preserving their current order, to the end of the track $L_{(2h+1)a+(i\bmod (2h+1))}$. Consequently, within each copy $x_i[H]$, the order of the vertices agrees with the order inherited from~$\mathcal L$. Furthermore, if $i<j$ and $i\equiv j \bmod  (2h+1)$, then, on every track shared by $x_i[H]$ and $x_j[H]$, all vertices of $x_i[H]$ precede all vertices of $x_j[H]$. Equivalently, for any two distinct vertices $u$ and $v$ assigned to the same track of $\Psi$, we have $u\prec_\Psi v$ if and only if \[ \operatorname{ind}_P(u.x)<\operatorname{ind}_P(v.x), \] or \[ \operatorname{ind}_P(u.x)=\operatorname{ind}_P(v.x) \quad\text{and}\quad u.y\prec_{t_{\mathcal L}(u.y)}v.y. \] Hence, the order on each track of $\Psi$ is lexicographic with respect to the path index and the order inherited from $\mathcal L$, but it is constructed by scanning and concatenating ordered lists rather than by sorting the vertices.
We represent every track, list, and bucket as a linked list with pointers to its first and last elements. Under this representation, appending one element to a list and concatenating two lists take constant time. Moreover, only nonempty buckets are created. We also assume that $\cal L$ has no empty tracks, and hence $t < |V(H)|$, and that $|V(P)| + |V(H)| \in O(|V(G)|)$. Under these assumptions, it follows immediately that the space and time complexity of \cref{alg:path-track-layout} are linear in $h \cdot t + |V(G)|$. In particular, in \cref{alg:path-track-layout}, lines \ref{lst:line:1}-\ref{lst:line:2} require $O(h \cdot t)$ time, lines \ref{lst:line:3}-\ref{lst:line:4} require $O(n)$ time, and lines \ref{lst:line:5}-\ref{lst:line:6} require $O(h \cdot t + n)$ time.

In what follows, we outline the algorithm supporting \cref{thm:main} and show that, when applied to $G$, it runs in $O(n + h \cdot r \cdot t + f_t(H))$ time, assuming that there exists an algorithm ${\cal A_H}$ that computes a $t$-track layout of $H$ in $f_t(H)$ time.
In particular, the algorithm of \cref{thm:main} works as follows.

First, it applies the algorithm ${\cal A_H}$ to compute a $t$-track layout $\cal L$ of $H$.
Second, it provides $\cal L$ together with $D$, path $P=(x_1,\ldots,x_r)$, and the integer $r$ to \cref{alg:path-track-layout}  (recall that $P^{r-1}\simeq K_r$, $D \subseteq K_r \boxtimes H$) to obtain a track layout $\Psi_D$ of $D$ on at most $d \leq r \cdot t$ tracks in $O(r \cdot t + n)$ time.
Third, it provides $\Psi_D$ together with $G$ (recall that $G \subseteq P^h \boxtimes D$) 
and the integer $h$ to \cref{alg:path-track-layout} to obtain a track layout $\Psi$ of $G$ on $(2h+1) \cdot d$ tracks in $O(h \cdot d + n)$ time; specifically, in order to apply \cref{alg:path-track-layout}, for each vertex $v$ of $G$ we leave the field $v.x$ unchanged and assume that the field $v.y$ is given by the pair $\langle y_j,z_q\rangle$ combining the original $y$- and $z$-coordinates of $v$. Hence, the algorithm outputs a track layout of $G$ on at most $(2h+1)\cdot r \cdot t $ tracks in $O(n + h \cdot r \cdot t +  f_t(H))$ time. In particular, for planar graphs, we have that $h=1$, $r=3$, and $H$ belongs to the class of planar 3-trees, for which track layouts with at most $25$ tracks can be computed in linear time~\cite{DBLP:journals/jgaa/Pupyrev20}. Therefore, a track layout of any planar graph on at most $225$ tracks can be computed in linear time.

\section{Conclusions}
\label{sec:conclusions}

In this work, we investigated the problem of computing track layouts for families of graphs that admit a product structure. Our main result establishes new bounds on the track number of several broad graph families, including  genus-$k$, $k$-planar, $k$-framed, $k$-map, and $k$-string graphs. Moreover, in contrast to previous approaches, which are rather technically intricate (see, e.g.,~\cite{DBLP:journals/jgaa/Pupyrev20}), the framework developed in this work is both elegant and insightful, while also providing a solid foundation for further investigations. Most notably, it translates into a simple algorithm that needs no complex data structures and runs in linear time for planar graphs.

We conclude by highlighting and discussing two relevant open questions:

\begin{itemize}
\item A central open problem is to improve the upper bound of $(k+1)(2^{k+1}-2)^k$ on the track number of $k$-trees by Wiechert~\cite{DBLP:journals/combinatorics/Wiechert17}. Any progress in this direction would immediately translate into corresponding improved bounds for the track number of most graph families listed in \cref{tbl:results}. In particular, the case $k = 3$ under the additional assumption of planarity is of special interest, as several graph classes can be expressed in terms of the strong product of simple graphs and a planar $3$-tree. In this regard,
improving the current upper bound of $25$ on the track number of planar $3$-trees due to Pupyrev~\cite{DBLP:journals/jgaa/Pupyrev20} would have significant implications for these classes as well. 

\item As a second open problem, we highlight the challenge of improving the upper bound of $225$ on the track number of planar graphs, which our results match. Progress in this direction could provide valuable insights toward further improvements to the bounds established in this work. Despite our efforts towards this goal, however, an improvement remains elusive, highlighting both the difficulty of the problem and its fundamental importance in this area of research.
\end{itemize}

\bigskip\noindent\textbf{Acknowledgments. }The work of M. A. Bekos is supported by the HFRI grant 26320.
The work of P.~Hlin\v{e}n\'y is supported by the project 26-21334S of the Czech Science Foundation.

\printbibliography
\end{document}